\documentclass[runningheads]{llncs}

\usepackage{amsmath, amssymb, amsfonts}
\usepackage{mathtools}
\usepackage{graphicx}
\usepackage{booktabs}
\usepackage{microtype}
\usepackage{enumitem}
\usepackage{kotex}
\usepackage{comment}
\usepackage{hyperref}
\hypersetup{colorlinks=true, linkcolor=blue, citecolor=blue, urlcolor=blue}
\usepackage{tabularx}
\usepackage{adjustbox}

\newtheorem{assumption}{Assumption}
\newtheorem{observation}{Observation}

\begin{document}

\title{(Im)possibility of Incentive Design for Challenge-based Blockchain Protocols}

\author{Suhyeon Lee\inst{1,2} \and Dieu-Huyen Nguyen\inst{1} \and Donghwan Lee\inst{1}}
\institute{Tokamak Network \and Hashed Open Research}

\maketitle

\begin{abstract}
Blockchains offer a decentralized and secure execution environment strong enough to host cryptocurrencies, but the state-replication model makes on-chain computation expensive. To avoid heavy on-chain workloads, systems like Truebit and optimistic rollups use challenge-based protocols, performing computations off-chain and invoking the chain only when challenged. This keeps normal-case costs low and, if at least one honest challenger exists, can catch fraud. What has been less clear is whether honest challengers are actually incentivized and a dishonest proposer is properly damaged under the worst case environment. We build a model with a colluding minority, heterogeneous costs, and three ordering modes. We then ask whether two goals can be met together: honest non-loss and fraud deterrence. Our results are clear: in single-winner designs, the incentive design is impossible or limited in scale. By contrast, in multi-winner designs, we obtain simple, explicit conditions under which both goals hold.

\keywords{blockchain, challenge protocol, mechanism design, fraud proof, rollup}
\end{abstract}

\section{Introduction}

Blockchains offer a decentralized, verifiable execution environment. Because many nodes must replicate state, on-chain computation is costly. To scale, systems shift intensive work off-chain and verify correctness only when disputes arise through challenge protocols, including Truebit-style verification, fraud proofs in optimistic rollups, and data-availability audits.
These mechanisms keep the common case cheap while retaining a safety valve: if at least one honest party challenges, fraud can be surfaced and a slashed deposit funds the payout. It seems very easy to satisfy.
However, this impression is not enough to construct a blockchain protocol for users considering its nature as \textit{verifier's dilemma} \cite{luu2015demystifying}. Also, even if an honest verifier is not lazy, it does not guarantee that the verification will be compensated enough. For example, if a dishonest proposer is disputed by a challenger account which is actually the proposer's Sybil account, the dishonest proposer will be able to avoid any loss. In this paper, we raise two questions. \emph{First, are honest challengers incentivized in the blockchain environments?} The first question is to guarantee at least one honest validator exists. \emph{Second, do dishonest proposers who post wrong values get proper economic damage that the system intends?} The second question is to guarantee dishonest proposers do not appear.

To answer the question, we develop a mechanism-agnostic model that captures different ordering properties and asks whether two goals can be met together: honest non-loss for included challengers and explicit deterrence against the proposer. The model features multiple challengers (validators) with heterogeneous but bounded costs, a proposer deposit and reward share, and a colluding subset of size $A<N/2$. Ordering is treated in three modes that differ in who captures priority value: fair ordering (random within short arrival intervals), unfair-by-builder, and unfair-by-proposer. We analyze single-winner and multi-winner with unbounded inclusion. Our contributions are as follows:
\begin{itemize}
\item \textbf{Problem formalization.} Prior work mainly posed the challenge via examples; we provide a clean formal model that makes the incentive question analyzable.
\item \textbf{Single-winner limits.} We show that, under realistic ordering assumptions, single-winner challenge designs are either impossible to incentivize or tightly constrained.
\item \textbf{Multi-winner feasibility.} We model multi-winner designs and prove that, unlike single-winner schemes, these systems generally make it easier to design fair and effective incentives.
\end{itemize}

\noindent The rest of the paper is organized as follows.
Section~\ref{sec:related} briefly reviews related work.
Section~\ref{sec:model} formalizes the model and design goals.
Section~\ref{sec:single} presents single-winner limits under proposer- and builder-ordered priority and the scalability bound under fair ordering.
Section~\ref{sec:multi} establishes feasibility for multi-winner and shows scalability-free calibration.
Section~\ref{sec:discussion} discusses the design trade-offs and implications.
Section~\ref{sec:conclusion} concludes with implications and implementation guidance.

\section{Related Work}
\label{sec:related}

Prior work develops concrete dispute protocols and fraud-proof algorithms for rollups, including BoLD and Dave \cite{alvarez2024bold,nehab2024dave}. These protocols, however, enforce single-winner designs through duplicate prevention mechanisms at the protocol level. Separate strands analyze validator behavior, covering incentive alignment \cite{mamageishvili2023incentive} and diligence under economic frictions \cite{AFT2024diligence}, and study censorship risks in fraud-proof timelines \cite{berger2025economic}. Announcement-game dynamics \cite{AAMAS2025accouncement} and risks around sequencer centralization and censorship \cite{motepalli2023sok} are documented alongside broader rollup attack surfaces \cite{koegl2023attacks} and accountable designs \cite{tas2022accountable}. Within this space, Lee \cite{lee2025wtsc} explicitly articulates a vulnerability of fraud-proof incentive systems in optimistic-rollup dispute games and sketches countermeasures, but does not provide a general feasibility criterion or proofs that rule out the failure modes. Our work provides that missing layer via a compact formal model that separates ordered priority and single- versus multi-winner settings, proving single-winner limits and a feasibility boundary for multi-winner inclusion.

\section{Model and Design Goals}
\label{sec:model}

In this section, we formalize a minimal model for challenge-based rewards: players, costs with an upper bound, deposits, ordering modes, and winner multiplicity. This paper follows a \textit{mechanism design} approach, which can be conceptualized as the inverse of traditional game theory analysis, in which the objective is designing game rules to achieve desired outcomes instead of analyzing all possible outcomes from existing, fixed rules.
We state two design goals used in the following sections.

\subsection{Players, costs, and deposits}
The game is modeled in normal form under complete information, such that the entire payoff structure is common knowledge among all players, who then choose their strategies simultaneously. We consider $N$ challengers to a specific proposal and assume a conservative Sybil and collusion bound where only a subset of challengers with $A=|\mathcal{A}| < N/2$ may collude with the proposer $P$. Pooling mechanisms such as BoLD's AssertionStakingPool count as one challenger, since the protocol registers only the pool address as a single economic entity \cite{arbitrum_bold_2024}. Each challenger stakes the same deposit $s$, and to submit a successful valid challenge they incur a cost $c_i>0$. The cost $c_i$ comprises of the initial cost to identify fraud and the cost to process the challenge till the end (e.g. executing the smart contract to prove).Costs may vary across agents and across challenge processes, but admit known finite upper bounds. Let
\[
c_i \;=\; c_{i}^{\textsf{init}} + c_{i}^{\textsf{proc}}, \qquad
\tilde c^{\textsf{init}} \;\ge\; \sup_i c_{i}^{\textsf{init}}, \qquad
\tilde c^{\textsf{proc}} \;\ge\; \sup_i c_{i}^{\textsf{proc}} .
\]
Define the aggregate worst-case cost bound as
\[
\tilde c \;:=\; \tilde c^{\textsf{init}} + \tilde c^{\textsf{proc}},
\]
so that $c_i \le \tilde c$ for all $i$.
The bound $\tilde c$ serves as a worst-case guarantee for verifying individual rationality (IR).

\subsection{Reward pool, winners, and splits}

\paragraph{Slashing.}
Upon acceptance of a valid fraud proof, the proposer's deposit $D_p$ is slashed.
A fraction $\alpha D_p$ is paid to challengers designated as winners and the remainder $(1-\alpha)D_p$ is burned, with $\alpha\in(0,1]$. Even though burning a part of the slashed deposit in challenge-based protocols is not generally adopted, we introduce this rule to force a minimal economic damage under a possibility of collusion.

\paragraph{Winner set.}
The system has the number of winners $m$. The single-winner setting $m=1$ is a special case which does not allow multiple winners.
In the multi-winner setting, the protocol imposes no ex-ante numeric cap on winners. However, runtime constraints
(for example, dispute-window length and blockchain throughput) imply that the realized number of winners $m$ is finite and ex-post bounded.

\paragraph{Split rule.}
We adopt equal split among winners: each of the $m$ included winners receives $\alpha D_p/m$. That is, with fewer winners, each winner receives a larger share. \\

Lastly, we add an assumption to consider the worst reward of the honest challengers and an observation follows.

\begin{assumption}[Best-effort participation] \label{assumption: best-effort}

Validators try to submit a valid challenge during the dispute period whenever a public conservative bound on net payoff is non-negative using an exogenous upper bound on fees and the announced inclusion capacity. The realized winners do not exceed capacity or the number of valid challengers, and in the worst case the capacity is saturated.
\end{assumption}

Colluding challengers aligned with the fraudulent proposer would otherwise have no need to challenge, but since honest challengers engage under this assumption, they must also submit valid challenges to recapture any share of the reward.

\begin{observation}[Coalition recapture]\label{obs:recapture}
If at least one honest challenger expects a nonnegative payoff from submitting a valid challenge,
then the proposer-validator coalition has a best response to also submit a valid challenge to recapture reward that comes from the slashed deposit.
This strategy is strictly profitable when sequencing fees are internally rebated (e.g., proposer-run auction),
and non-worse under competitive external auctions.
\end{observation}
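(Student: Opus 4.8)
The plan is to treat the slashing event as a sunk loss for the coalition and then compare the coalition's aggregate payoff between \emph{submitting} and \emph{abstaining}, conditional on that event. First I would note that the hypothesis---some honest challenger anticipates a nonnegative payoff---together with Assumption~\ref{assumption: best-effort} forces that challenger to submit a valid challenge, so a valid fraud proof is accepted and the proposer deposit $\Dp$ is slashed regardless of what the coalition does. Hence the loss $-\Dp$ is common to every coalition strategy and drops out of the comparison; what remains is how much of the pool $\alpha\Dp$ the coalition can pull back, the rest being burned or paid to honest winners.

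Next I would fix the realized winner count $m$ under the worst-case saturation of Assumption~\ref{assumption: best-effort} and write the marginal payoff of having one coalition-controlled account post a valid challenge. A valid challenge returns its own stake $s$ (only the proposer's deposit is slashed), so the only outflows are the processing cost $c_i \le \ctilde$ and the sequencing fee, while the inflow is the equal share $\alpha\Dp/m$ captured from the pool. The comparison against abstaining (which yields zero recapture for the coalition) therefore reduces to: recaptured share minus net submission cost.

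For the two fee regimes I would transfer the hypothesis inequality. Best-effort participation certifies the honest challenger's conservative payoff as $\alpha\Dp/m - \ctilde - (\text{fee}) \ge 0$, evaluated at the saturated $m$ and the cost upper bound $\ctilde$. Under an internally rebated auction the fee paid by a coalition account returns to the proposer, i.e.\ to the coalition, so its net submission cost collapses to $c_i \le \ctilde$; the marginal payoff $\alpha\Dp/m - c_i$ then dominates the certified-nonnegative quantity by the rebated fee and is strictly positive whenever that fee is positive---giving strict profitability. Under a competitive external auction the fee is a genuine outflow to a third party, but a coalition account faces exactly the same reward share, cost bound, and fee as the honest challenger, so its marginal payoff inherits the same $\ge 0$ bound and submitting is non-worse than abstaining. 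Concluding, posting a valid challenge is a best response for the coalition: weakly so in general, and strictly so under internal rebate.

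The main obstacle I expect is the winner-count bookkeeping when the coalition adds accounts: under saturation an extra coalition challenger may either displace an honest winner or enlarge $m$ and dilute every share, so I must argue the recaptured share net of cost stays nonnegative in both cases. The cleanest route is to observe that any share the coalition holds is strictly positive and is diverted from burning or from honest winners, and that the per-winner share a coalition account evaluates is bounded below by the same conservative $\alpha\Dp/m$ the honest challenger used to certify nonnegativity; aligning those two bounds is the delicate step, since it is what lets me reuse the hypothesis inequality verbatim for the colluding account.
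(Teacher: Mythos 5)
Your proposal is correct and follows essentially the same reasoning the paper uses: the paper gives no formal proof for this observation, only the surrounding intuition that honest participation (forced by Assumption~\ref{assumption: best-effort}) makes the slashing of $D_p$ inevitable, so the coalition's best response is to submit valid challenges and recapture shares, strictly so when the fee is internally rebated and weakly so under an external auction. Your treatment of slashing as sunk, your transfer of the honest challenger's conservative bound $\alpha D_p/m - \tilde c - f \ge 0$ to the colluding account, and your handling of the winner-count issue via the announced capacity in Assumption~\ref{assumption: best-effort} supply exactly the formalization the paper leaves implicit.
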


\subsection{Ordering and fees: fair vs.\ unfair}
\label{subsec:ordering}

After validity is checked, an ordering mechanism fixes a realized order $\sigma$ of valid challenges, and inclusion takes the first $m$ positions designated as winners (the case $m=N$ corresponds to systems that include all valid challenges. 
The mechanism’s impact depends critically on how this ordering is performed.
We define fair ordering as content-agnostic transaction sequencing. Its opposite, unfair ordering, sets order on economic grounds. Today, the majority of public chains and applications effectively operate under the unfair ordering.

\textbf{Unfair ordering (\textsf{U}).} We categorize it into two different cases. Especially, the second case implies that the proposer can order the winners' sequence regardless of the challenge submission order.

\begin{enumerate}

\item \textbf{(\textsf{U-B}) Builder-ordered priority:}
An upstream L1/L2 builder (or block producer) \emph{sells sequencing priority via an auction}.
Let the eventual winner be $w$ with private value $v_w := \alpha D_p/m - c_w^{\textsf{proc}}$. \footnote{Once validity costs are sunk, bidders value a priority slot by $v_i=\alpha D_p/m - c_i^{\textsf{proc}}$, and competition drives winner surplus near zero.}
Under market efficiency,
the clearing fee satisfies
$ f^{\star} \;\le\; v_w, \text{ for any small bid increment } \varepsilon>0. $
The fee is paid to the builder (not to $P$) and should be understood as inclusive of realistic ordering competition.

\item \textbf{(\textsf{U-P}) Proposer-ordered priority:}
Priority is sold via an on-chain auction\footnote{For example, implementations that guarantee at least fair competition as discussed in Lee \cite{lee2025wtsc}. We do not consider implementations where $P$ can arbitrarily sequence all dispute games within a chess clock without paying any fee, since that would deny honest challengers any meaningful opportunity.} implemented by the proposer $P$. 
Under the same market-clearing logic as in \textsf{U-B}, honest challengers who seek priority pay $f^{\star}$ (to $P$). For colluding or Sybil challengers, this payment is internally recycled within the coalition, so their \emph{effective} priority cost in payoffs is $f=0$.\footnote{The UP model illustrates a ``worst-case scenario'' by assuming a malicious proposer has total control over transaction ordering. Although the challenge period usually covers several blocks, this model specifically highlights the maximum advantage an attacker could exploit within a single block.}

\end{enumerate}

\textbf{Fair ordering (\textsf{F}).} Valid challenges that arrive within a short time window $\Delta$ (a simultaneous-arrival interval) are ordered by a symmetric, identity- and fee-agnostic random permutation (random tie-breaking). There is no priority fee ($f=0$) in this regime.

\subsection{Key Parameters and Payoff Matrix}

We consider a game between an Honest Challenger (HC) and the Adversarial Coalition (AC). The HC always challenges a potentially fraudulent transaction, while the AC decides whether to Commit Fraud. We assume that once an honest challenger initiates a dispute, the fraud is always detected. The loss of the adversarial coalition is determined as $D_p - \phi\alpha D_p$ = $(1-\phi\alpha) D_p$. This loss is proportional to the size of coalition $\phi$ and the reward fraction $\alpha$ of proposer deposit  if fraud proof success. A summary of notations is provided in Table~\ref{tab:parameters}, and the detailed payoff matrix for the single-winner game is presented in Table~\ref{tab:payoff_matrix}.

\begin{table}[hb]
\centering
\caption{Description of Parameters}
\label{tab:parameters}
\renewcommand{\arraystretch}{1.2} 

\begin{adjustbox}{width=0.92\columnwidth, center}
\begin{tabular}{@{}l @{\hspace{0.2cm}} l p{8cm}@{}} 
\toprule
\textbf{Symbol} & \textbf{Parameter} & \textbf{Description} \\
\midrule
$c_i^{init}$ & Challenger Initial Cost & The initial cost for an honest challenger to verify and submit a challenge. \\
$c_i^{proc}$ & Challenger Processing Cost & The processing cost for an honest challenger. \\
$c_i$ & Total Challenger Cost & The total cost for an honest challenger, including initial and processing costs. \\
$D_p$ & Proposer's Deposit & The deposit made by the proposer. \\
$\alpha$ & Reward Fraction & The fraction of the proposer's deposit ($D_p$) distributed as a reward if fraud proof succeeds. \\
$\phi$ & Colluding Fraction & The fraction of colluding challengers within the adversarial coalition (equivalent to $A/N$). \\
$\eta$ & Min. Deterrence Level & The minimum deterrence level required to prevent fraud. \\
$f$ & Priority Fee & The priority fee a validator pays in a \textsf{U-b} or \textsf{U-P} setup. \\
$f^*$ & Market-Clearing Fee & The market-clearing priority fee. \\
\bottomrule
\end{tabular}
\end{adjustbox}
\end{table}

\begin{table}[tb]
\centering
\caption{Payoff Matrix for Single-Winner Game}
\begin{tabular*}{\textwidth}{@{\extracolsep{\fill}}lcc}
\toprule
\textbf{Challenger Action} & \textbf{AC: Commit Fraud} & \textbf{AC: Not Commit Fraud} \\
\midrule
HC: Challenge (\textsf{U-P}) & $\alpha D_p - c_i - f^*; -(1-\phi\alpha) D_p$ & $- c_i - f^*; 0$ \\
HC: Challenge (\textsf{U-b}) & $\alpha D_p - c_i - f^*; -(1-\phi\alpha) D_p$ & $- c_i - f^*; 0$ \\
HC: Challenge (Fair) & $\alpha D_p - c_i; -(1-\phi\alpha) D_p$ & $- c_i; 0$ \\
\bottomrule
\end{tabular*} %
\label{tab:payoff_matrix}
\end{table}

Under the multi-winner setting, the economic value of transaction ordering is significantly diminished. Therefore, the ordering fee is nearly zero with all the order fairness environments (\textsf{U-P}, \textsf{U-b}, and \textsf{F}). Each of the $N$ included winners is eligible for $\frac{1}{N}$ of the reward. Table~\ref{tab:payoff_matrix_multi} illustrates the payoff matrix under this multi-winner setting.

\begin{table}[h!]
\centering
\caption{Payoff Matrix for Multi-Winner Game}
\begin{tabular*}{\textwidth}{@{\extracolsep{\fill}}lcc}
\toprule
\textbf{Challenger Action} & \textbf{AC: Commit Fraud} & \textbf{AC: Not Commit Fraud} \\
\midrule
HC: Challenge & $\frac{1}{N}\alpha D_p - c_i; -(1-\phi\alpha) D_p$ & $- c_i; 0$ \\
\bottomrule
\end{tabular*}
\label{tab:payoff_matrix_multi}
\end{table}

\subsection{Design goals}
We require two design goals for the protocol. One is to economically motivate an honest challenger for security. The other is to demotivate malicious block proposer for security.

\textbf{\textsf{O1} (Ex-ante IR at participation stage)}\label{def:O1-exante}
In the worst-case scenario where collusion is allowed and the proposer submits a fraudulent state root, every honest challenger
$i\notin\mathcal{A}$ must have non-negative \emph{ex-ante} expected utility before the initial participation decision:
\[
\mathrm{E}[U_i]\;\ge\;0.
\]

\textbf{\textsf{O2} ($\eta$-fraction deterrence).}
We explicitly require that the adversarial coalition (proposer plus colluding challengers) incurs a loss of at least an $\eta$-fraction of the deposit whenever fraud is detected:
\[
\mathrm{Loss}_{\mathrm{adv}} \;\ge\; \eta D_p
\qquad (\eta\in(0,1)).
\]

These goals yield the following two lemmas and one corollary.

\begin{lemma}[Worst-case lower bound on $\alpha$ for the reward goal \textsf{O1}]
\label{lem:alpha-lower-bound}
To hold \textsf{O1}, it is necessary and sufficient that
\[
\alpha \;\ge\; \alpha_L(m,f) \;:=\; \frac{m\,(\tilde c + f)}{D_p}.
\]
\end{lemma}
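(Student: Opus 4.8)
The plan is to reduce \textsf{O1} to a single per-winner individual-rationality inequality, using the equal-split rule together with the worst-case cost bound $\tilde c$, and then solve that inequality for $\alpha$. The guiding observation is that $\alpha_L(m,f)$ is exactly the threshold at which the equal-split share $\alpha D_p/m$ awarded to one included winner covers the worst-case validity cost $\tilde c$ plus the worst-case priority fee $f$. So the whole statement should follow once I pin down the correct worst-case payoff of a single included honest challenger and identify $E[U_i]$ in \textsf{O1} with that participation payoff.

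First I would write the realized payoff of an honest challenger $i\notin\mathcal{A}$ who participates and is included among the $m$ winners. Since the challenge is valid, the staked deposit $s$ is returned and cancels; the challenger collects the equal-split share $\alpha D_p/m$, against which it incurs the validity cost $c_i$ and, under unfair ordering, the priority fee $f$ needed to secure a winning slot (with $f=0$ under \textsf{F}). Hence the net payoff is $\alpha D_p/m - c_i - f$. Applying the uniform bound $c_i\le\tilde c$ gives the conservative lower bound $\alpha D_p/m - \tilde c - f$, which by Assumption~\ref{assumption: best-effort} is precisely the public conservative bound driving participation: it uses the announced capacity $m$ and the exogenous fee bound $f$, and in the worst case the capacity is saturated, so an included honest challenger's expected utility is no smaller than this quantity. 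By Observation~\ref{obs:recapture} the coalition's recapture behavior confers no positive externality on honest challengers, so this bound is tight and represents the genuine worst case.

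For sufficiency I would assume $\alpha\ge\alpha_L(m,f)$, i.e. $\alpha D_p/m\ge\tilde c+f$. Then for every honest $i$ the payoff obeys $\alpha D_p/m - c_i - f\ge\tilde c - c_i\ge 0$, so $\mathrm{E}[U_i]\ge 0$ and \textsf{O1} holds. For necessity I would argue the contrapositive: if $\alpha<\alpha_L(m,f)$ then $\alpha D_p/m<\tilde c+f$, so the certified worst-case payoff $\alpha D_p/m-\tilde c-f$ used to verify IR (the role the model assigns to $\tilde c$) is strictly negative; by best-effort participation a worst-case honest challenger then declines to submit, defeating the purpose of \textsf{O1}. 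Combining the two directions yields the claimed equivalence.

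The step I expect to be the main obstacle is the second one: justifying that the worst-case ex-ante utility collapses to the per-winner conservative quantity $\alpha D_p/m - \tilde c - f$ with no probability-of-inclusion factor in front of the reward. This is legitimate only because \textsf{O1} is stated for \emph{included} honest challengers, so whether a given honest challenger actually wins a slot is deliberately factored out and deferred to the ordering analysis of Sections~\ref{sec:single}--\ref{sec:multi}. I would state this scoping explicitly, and flag that $f$ must be read as the worst-case priority fee appropriate to the ordering mode in force, so that the single bound specializes correctly to each of \textsf{U-B}, \textsf{U-P}, and \textsf{F}.
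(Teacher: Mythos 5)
Your proposal is correct and follows essentially the same route as the paper's proof: write the included honest winner's payoff as $\alpha D_p/m - c_i - f$, instantiate the worst case $c_i=\tilde c$, and observe that \textsf{O1} is then equivalent to $\alpha D_p/m - \tilde c - f \ge 0$, which rearranges to the stated bound. Your additional scoping remarks (no inclusion-probability factor because \textsf{O1} is evaluated for included challengers, and $f$ read as the ordering-mode-appropriate worst-case fee) are consistent with what the paper leaves implicit.
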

\begin{proof}
An included honest winner earns $\alpha D_p/m - c_i - f$. The worst case is $c_i=\tilde c$, so \textsf{O1} is equivalent to $\alpha D_p/m - \tilde c - f \ge 0$, i.e., $\alpha \ge m(\tilde c+f)/D_p$. The converse is immediate.
\end{proof}

\begin{lemma}[Upper bound on $\alpha$ for the deterrence goal \textsf{O2}]
\label{lem:alpha-upper-bound}
For $\alpha\in(0,1]$ and $\phi=\frac{A}{N}\in[0,1]$, \textsf{O2} implies
\[
\alpha \;\le\; \min\!\left\{1,\frac{1-\eta}{\phi}\right\},
\]
with the convention that when $\phi=0$ the bound reduces to $\alpha\le 1$.
\end{lemma}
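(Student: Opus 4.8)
The plan is to express the adversarial coalition's net loss as the slashed deposit minus whatever the coalition recaptures from the reward pool, and then invoke Observation~\ref{obs:recapture} to lower-bound that recapture by the coalition's proportional share. Concretely, I would write
\[
\mathrm{Loss}_{\mathrm{adv}} \;=\; D_p - R_{\mathcal{A}},
\]
where $R_{\mathcal{A}}$ is the part of the reward $\alpha D_p$ that flows back to the proposer--challenger coalition. Equivalently, the loss is the portion of the slashed deposit that permanently leaves the coalition: the burned mass $(1-\alpha)D_p$ together with the reward awarded to honest winners.

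First I would pin down $R_{\mathcal{A}}$. Since the coalition controls a fraction $\phi=A/N$ of the $N$ valid challengers and, by Observation~\ref{obs:recapture}, its best response is to submit valid challenges and recapture reward, it secures at least its proportional share of the pool: under fair random ordering the expected number of colluding winners is $\phi m$, so the expected recapture is $\phi m\cdot(\alpha D_p/m)=\phi\,\alpha D_p$, and under proposer- or builder-ordered priority the coalition can only do better by prioritizing its own addresses. Hence $R_{\mathcal{A}}\ge\phi\,\alpha D_p$, which gives the upper bound
\[
\mathrm{Loss}_{\mathrm{adv}} \;\le\; (1-\phi\,\alpha)\,D_p .
\]

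Next I would combine this with \textsf{O2}. The deterrence requirement $\mathrm{Loss}_{\mathrm{adv}}\ge\eta D_p$ sandwiches the loss as $\eta D_p \le \mathrm{Loss}_{\mathrm{adv}} \le (1-\phi\,\alpha)D_p$, so $\eta\le 1-\phi\,\alpha$, i.e.\ $\phi\,\alpha\le 1-\eta$. For $\phi>0$ this rearranges to $\alpha\le(1-\eta)/\phi$; together with the standing constraint $\alpha\le 1$ from $\alpha\in(0,1]$ this yields $\alpha\le\min\{1,(1-\eta)/\phi\}$. For the degenerate case $\phi=0$ there are no colluders, $R_{\mathcal{A}}=0$, and the loss is the full deposit $D_p\ge\eta D_p$ for every $\eta\in(0,1)$, so \textsf{O2} imposes no additional constraint and the bound collapses to $\alpha\le 1$, matching the stated convention.

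The main obstacle is the recapture lower bound $R_{\mathcal{A}}\ge\phi\,\alpha D_p$: I must argue carefully that the coalition can always guarantee at least its proportional share across all three ordering modes, and decide whether \textsf{O2} is read in expectation (fair ordering) or in the realized worst case (unfair ordering). I would also make explicit that ignoring the colluders' own challenge costs is the conservative choice --- including them only enlarges $\mathrm{Loss}_{\mathrm{adv}}$ and thus relaxes \textsf{O2} --- so the derived inequality remains a valid necessary condition.
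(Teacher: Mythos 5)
Your proof is correct and takes essentially the same route as the paper: the paper's one-line proof simply identifies $\mathrm{Loss}_{\mathrm{adv}} = D_p - \phi\,\alpha D_p$ and rearranges \textsf{O2} into $\phi\,\alpha \le 1-\eta$, which is exactly your final algebra, with your recapture bound $R_{\mathcal{A}} \ge \phi\,\alpha D_p$ (via Observation~\ref{obs:recapture}) merely making explicit what the paper assumes by fiat. One minor caveat: your closing claim that including the colluders' own challenge costs ``remains a valid necessary condition'' has the direction backwards --- enlarging $\mathrm{Loss}_{\mathrm{adv}}$ makes \textsf{O2} \emph{easier} to satisfy, so it is precisely the necessity of $\alpha \le (1-\eta)/\phi$ that would be weakened (its sufficiency as a design constraint is what survives) --- but since the paper's model excludes those costs from $\mathrm{Loss}_{\mathrm{adv}}$, this side remark does not affect the validity of your derivation.
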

\begin{proof}
From \textsf{O2},
\(
D_p - \phi\,\alpha D_p \ge \eta D_p
\)
iff
\(
\phi\,\alpha \le 1-\eta
\).
If $\phi>0$, then \(\alpha \le (1-\eta)/\phi\); in all cases \(\alpha\le 1\).
\end{proof}

\begin{corollary}[$\phi$-free nontrivial lower bound on $\eta$ for the deterrence goal \textsf{O2}]
\label{cor:eta-lower-bound}
Under Assumption~\ref{assumption: best-effort} (so $\phi<\tfrac12$), enforcing a $\phi$-free conservative deterrence bound via
\[
\alpha \;\le\; 2(1-\eta)
\]
is nontrivial (i.e., $<1$) if and only if $\eta>\tfrac12$.
\end{corollary}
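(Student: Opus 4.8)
The plan is to treat $2(1-\eta)$ as the value obtained by making the $\phi$-dependent ceiling of Lemma~\ref{lem:alpha-upper-bound} uniform over all admissible collusion fractions, and then to reduce the word ``nontrivial'' to a one-line algebraic comparison. First I would recall that Lemma~\ref{lem:alpha-upper-bound} supplies, for each fixed $\phi>0$, the deterrence-compatible ceiling $\alpha\le(1-\eta)/\phi$. Since Assumption~\ref{assumption: best-effort} enforces $A<N/2$, i.e.\ $\phi=A/N<\tfrac12$, the map $\phi\mapsto(1-\eta)/\phi$ is decreasing in $\phi$ (using $\eta<1$) and is therefore tightest as $\phi\uparrow\tfrac12$, where it approaches $2(1-\eta)$. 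Hence imposing the single constraint $\alpha\le 2(1-\eta)$ is the natural $\phi$-free surrogate: it never mentions $\phi$, yet it is designed to dominate the family of $\phi$-dependent ceilings.

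Next I would verify that this surrogate is genuinely safe, i.e.\ that it is a sufficient condition for \textsf{O2}. Because $\phi<\tfrac12$ strictly, we have $1/\phi>2$ and hence $(1-\eta)/\phi>2(1-\eta)$ for every realized $\phi$. Consequently $\alpha\le 2(1-\eta)$ forces $\alpha<(1-\eta)/\phi$, so the bound of Lemma~\ref{lem:alpha-upper-bound} is met and deterrence holds regardless of the true collusion fraction. This records that $2(1-\eta)$ is a conservative, worst-case-over-$\phi$ deterrence bound, exactly as the statement describes.

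Finally I would settle the stated equivalence. A ceiling on $\alpha$ carries information only when it lies strictly below the a priori cap $\alpha\le 1$ already present in Lemma~\ref{lem:alpha-upper-bound}; otherwise it is implied by $\alpha\le 1$ and constrains nothing. Thus ``nontrivial'' means precisely $2(1-\eta)<1$, which rearranges to $1-\eta<\tfrac12$, i.e.\ $\eta>\tfrac12$, with both implications immediate and reversible. For $\eta\le\tfrac12$ one instead gets $2(1-\eta)\ge 1$, so the surrogate is vacuous and the equivalence closes.

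There is no real obstacle here—the content is elementary arithmetic—but the one place needing care is the strictness bookkeeping: the infimum $2(1-\eta)$ of $(1-\eta)/\phi$ over $\phi<\tfrac12$ is not attained, so I must confirm that the boundary choice $\alpha=2(1-\eta)$ is still admissible. It is, precisely because every realized $\phi$ is \emph{strictly} below $\tfrac12$, which keeps $(1-\eta)/\phi$ strictly above $2(1-\eta)$; that strict gap absorbs the boundary case, and the rest is routine.
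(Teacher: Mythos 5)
Your proof is correct and follows essentially the same route as the paper: substitute the worst admissible collusion fraction $\phi_{\max}=\tfrac12$ into Lemma~\ref{lem:alpha-upper-bound} to obtain the $\phi$-free ceiling $2(1-\eta)$, then observe that $2(1-\eta)<1$ iff $\eta>\tfrac12$. Your added care about the infimum not being attained (and the strict gap $(1-\eta)/\phi>2(1-\eta)$ absorbing the boundary case $\alpha=2(1-\eta)$) is a slight refinement the paper glosses over, but it does not change the argument.
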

\begin{proof}
Replacing $\phi$ in Lemma~\ref{lem:alpha-upper-bound} by its worst admissible value $\phi_{\max}=\tfrac12$ yields the φ-free sufficient bound $\alpha\le 2(1-\eta)$. This is $<1$ iff $2(1-\eta)<1$, i.e., $\eta>\tfrac12$.
\end{proof}

\subsection{Model Scope and Limitations}
Our model focuses on a single-stage challenge game. This is a simplification of real-world protocols like BoLD, Dave, and Cannon, which involve multiple rounds of proposals and challenges. This single-stage abstraction allows us to isolate and formally analyze the fundamental incentive problem: how to ensure that honest parties are motivated to challenge fraudulent proposals while deterring malicious proposers. The insights from this analysis provide a foundation for understanding the incentive structure of more complex multi-stage protocols.

While our model does not capture the full complexity of multi-stage protocols, the core incentive problems we identify—priority capture, auction dynamics, and reward sharing—persist across every stage. Consequently, our findings offer a framework for addressing the incentive hurdles inherent in each level of challenge-based protocols.

\section{Single-Winner: Impossibility and Scalability Limits}
\label{sec:single}

This section analyzes single-winner mechanisms ($m=1$) under three ordering regimes: proposer-ordered priority (\textsf{U-P}), builder-sold priority (\textsf{U-B}), and fair ordering (\textsf{F}). We, then, obtain results that imply impossibility or scale-limited feasibility under the settings.

\subsection{\textsf{U-P} with single-winner: \textsf{O1} fails}
\label{subsec:single-UP}

In \textsf{U-P}, the proposer $P$ sells priority directly. For colluding challengers, the effective fee is $f=0$ (coalition's internal recycling), while honest challengers have no access to internal priority.

\begin{theorem}[\textsf{O1} failure under \textsf{U-P} with a single winner]\label{thm:UP-m1-O1}
Let $q_i \in [0, 1]$ be the probability that challenger $i$ is selected as the winner. 
Under \textsf{U-P} and a single winner setting ($m=1$),
for any parameters $(\alpha,D_p)$ and any cost profile $\{c_i^{\textsf{init}},c_i^{\textsf{proc}}\}_{i=1}^N$ with $c_i^{\textsf{init}}>0$,
there exists a coalition strategy $s^\star$ such that for every honest $i\notin\mathcal{A}$,
\[
q_i=0 \quad\text{and}\quad \mathrm{E}[U_i]
= q_i(\alpha D_p - c_i^{\textsf{proc}}) - c_i^{\textsf{init}} 
= -\,c_i^{\textsf{init}}  < 0.
\]
\end{theorem}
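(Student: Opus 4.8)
The plan is to exhibit one explicit coalition strategy $s^\star$ that deterministically seizes the unique winning slot with a coalition member, thereby forcing $q_i=0$ for every honest challenger and collapsing the payoff identity to $-c_i^{\textsf{init}}$. The whole argument rests on the defining features of \textsf{U-P}: the proposer controls the priority auction/ordering, and for any colluding or Sybil challenger the priority payment is internally recycled within the coalition, so its \emph{effective} fee is $f=0$. Because these two facts are independent of $(\alpha,D_p)$ and of the cost profile, the same construction will work for \emph{all} parameters, which is exactly what the quantifier in the statement demands.

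First I would fix one coalition challenger $j\in\mathcal{A}$. Such a member always exists in the worst-case adversary contemplated by the model, since the proposer can register a Sybil account that counts as a challenger; and because $j$ colludes with $P$ it knows the correct computation and can therefore produce a genuinely \emph{valid} challenge (so validity screening does not exclude it). The strategy $s^\star$ instructs $j$ to submit this valid challenge and to bid for the top priority slot, while $P$ uses its \textsf{U-P} ordering authority to place $j$ in position $1$. I would then argue that $j$ necessarily occupies the single winning position regardless of honest behavior: since $m=1$, only the first position is a winner; under \textsf{U-P} the proposer fixes the order irrespective of submission order, and even when this is modeled as the footnote's competitive on-chain auction, $j$'s effective fee is $0$, so $j$ can bid up to $\alpha D_p$ at zero net cost and outbid any honest challenger, whose value for the slot is only $v_w=\alpha D_p-c_w^{\textsf{proc}}<\alpha D_p$ and who must pay a real fee. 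Either framing yields the same conclusion: the winning slot is captured by $j$, so no honest $i\notin\mathcal{A}$ is ever selected, i.e.\ $q_i=0$.

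With $q_i=0$ in hand, the conclusion is a one-line substitution into the stated expected-utility identity:
\[
\mathrm{E}[U_i] = q_i(\alpha D_p - c_i^{\textsf{proc}}) - c_i^{\textsf{init}} = -\,c_i^{\textsf{init}} < 0 ,
\]
where strictness uses the hypothesis $c_i^{\textsf{init}}>0$. The initial cost is sunk at the participation stage (the honest challenger must pay $c_i^{\textsf{init}}$ merely to identify the fraud and engage), while the reward $\alpha D_p$ and processing cost $c_i^{\textsf{proc}}$ materialize only upon winning; since winning never occurs, only the sunk cost survives. As the derivation never invoked the values of $\alpha$, $D_p$, or the $c_i^{\textsf{proc}}$, this establishes \textsf{O1} failure uniformly over all parameters and cost profiles.

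The step I expect to be the main obstacle is the uniformity in the second paragraph: I must show the coalition can guarantee the winning slot to $j$ against \emph{every} honest strategy profile and every parameter setting, not merely in expectation. This is where the precise reading of \textsf{U-P} is load-bearing — I would lean on its explicit clause that $P$ may order the winners' sequence regardless of submission order, and reinforce it with the recycling argument so that even under ``at least fair competition'' the colluder's zero effective fee lets it dominate any honest bid. Once $q_i=0$ is pinned down uniformly, the rest is routine.
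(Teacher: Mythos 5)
Your proposal is correct and follows essentially the same route as the paper's proof: a coalition member with zero effective fee (due to internal recycling of the payment to $P$) deterministically outbids every honest challenger for the unique slot, forcing $q_i=0$ and hence $\mathrm{E}[U_i]=-c_i^{\textsf{init}}<0$, uniformly in $(\alpha,D_p)$ and the cost profile. The only cosmetic difference is that the paper caps honest bids by the market-clearing price $f^\star$ and has the colluder bid $f^\star+\varepsilon$, whereas you bound honest value by $\alpha D_p - c_w^{\textsf{proc}}$; both arguments hinge on the same recycling observation.
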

\begin{proof}
Assume a fraudulent state. Under \textsf{U-P} with $m=1$, the unique sequencing slot is sold via an on-chain auction. Let $f^{\star}$ be the market-clearing price. 
Define a coalition strategy $s^\star$: pick any $a\in\mathcal{A}$ and bid $b_a=f^{\star}+\varepsilon$ with $\varepsilon>0$. 
The payment is transferred to $P$ and recycled within $(P,\mathcal{A})$, so the coalition's effective cost is $0$, hence $s^\star$ is feasible for any bid increment $\varepsilon>0$.

In a competitive single-slot auction, any honest bidder's equilibrium bid is $\le f^{\star}$; therefore $b_a$ wins deterministically, and the sole winner is $a$. 
Thus for every honest $i\notin\mathcal{A}$, $q_i=0$. Transfers realize only upon inclusion, so
$\mathrm{E}[U_i]=-c_i^{\textsf{init}}$ (since $c_i^{\textsf{init}}>0$), and \textsf{O1} fails.
\end{proof}

\subsection{\textsf{U-B} with single-winner: \textsf{O1} is not guaranteed}
\label{subsec:single-UB}

We now consider the builder-run auction regime (\textsf{U-B}), where challengers compete for a single winner slot by bidding to an external builder. This analysis models the interaction as a two-stage game. First, challengers incur the upfront cost $c^{\textsf{init}}$ to participate. Second, they bid for the slot.

At the bidding stage, $c^{\textsf{init}}$ is a sunk cost and thus does not influence a rational bidder's valuation. The private value $v_i$ for the slot is therefore the net payoff upon winning: $v_i := \alpha D_p - c^{\textsf{proc}}$. In a competitive auction, the winning bid $f^\star$ will approach $v_i$, driving the winner's surplus $(v_i - f^
\star)$ towards zero. This dynamic ensures that the winner's net gain is insufficient to cover the initial participation cost.

\begin{theorem}[\textsf{O1} cannot be guaranteed under \textsf{U-B} with a single winner]\label{thm:UB-m1-O1}
In the \textsf{U-B} regime with a single-winner, for any protocol parameters and any cost profile where $c_i^{\textsf{init}}>0$, an honest challenger's ex-ante expected utility is strictly negative. 
\end{theorem}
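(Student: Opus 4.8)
The plan is to follow the same two-stage logic already laid out in the paragraph preceding the theorem, but to track the timing of when costs are incurred versus when the auction resolves, so that the strict negativity of ex-ante utility becomes transparent. First I would fix a fraudulent state and consider an arbitrary honest challenger $i\notin\mathcal{A}$. The key observation is that $c_i^{\textsf{init}}$ is incurred at the \emph{participation} stage and is therefore sunk before the bidding stage. At the bidding stage each surviving bidder values the single slot at its net-of-processing payoff $v_i=\alpha D_p - c_i^{\textsf{proc}}$, since $c_i^{\textsf{init}}$ no longer enters the marginal valuation.

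Next I would invoke the market-efficiency assumption for \textsf{U-B}: the clearing fee $f^{\star}$ paid to the builder satisfies $f^{\star}\le v_w$ for the eventual winner $w$, and under competition the winner's surplus $v_w - f^{\star}$ is driven arbitrarily close to $0$ (formally, $\le\varepsilon$ for any bid increment $\varepsilon>0$). Crucially, unlike \textsf{U-P}, here the fee is paid to the external builder and is \emph{not} recycled, so there is no coalition subsidy to consider; the competitive squeeze applies symmetrically to honest and colluding bidders alike. I would then combine the two stages: the winner's realized net gain from the slot is $v_w - f^{\star}\le\varepsilon$, which is a (near-)zero quantity, while \emph{every} participant — winner or loser — has already paid the strictly positive $c_i^{\textsf{init}}>0$. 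Writing the ex-ante utility as the probability-weighted slot payoff minus the certain initial cost, $\mathrm{E}[U_i]=q_i\,(v_i-f^{\star}) - c_i^{\textsf{init}}$, the first term is bounded above by $\varepsilon$ (and in the limit $0$), so $\mathrm{E}[U_i]\le \varepsilon - c_i^{\textsf{init}}<0$ for $\varepsilon$ small enough, giving the strict inequality.

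I expect the main obstacle to be stating the competitive-auction claim at the right level of rigor without an explicit auction-equilibrium model. The theorem asserts \emph{strict} negativity for \emph{any} protocol parameters, yet the ``surplus driven to zero'' argument is an approximate, $\varepsilon$-level statement coming from the informal market-efficiency footnote rather than a fully specified equilibrium. The honest resolution is to make the upper bound $v_w-f^{\star}\le\varepsilon$ hold for arbitrarily small $\varepsilon$ and then note that since $c_i^{\textsf{init}}>0$ is a fixed positive constant, choosing $\varepsilon<c_i^{\textsf{init}}$ yields strict negativity; the phrase ``cannot be guaranteed'' in the theorem title is what licenses this, since we only need the worst-case competitive outcome rather than every outcome. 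A secondary subtlety is handling the losing bidders: they get $q_i=0$ and simply forfeit $c_i^{\textsf{init}}$, so their utility is $-c_i^{\textsf{init}}<0$ immediately, which is the easy case and can be dispatched first to reduce the problem to bounding the winner's payoff.
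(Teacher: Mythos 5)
Your proposal is correct and follows essentially the same route as the paper's proof: writing $\mathrm{E}[U_i]=q_i\,(v_i-f^{\star})-c_i^{\textsf{init}}$, invoking the competitive-auction assumption to bound the winner's surplus by an arbitrary $\varepsilon>0$, and concluding strict negativity from $c_i^{\textsf{init}}>0$. Your explicit choice of $\varepsilon<c_i^{\textsf{init}}$ and your separate treatment of losing bidders are slightly more careful than the paper's one-line bound $q_i\varepsilon-c_i^{\textsf{init}}<0$, but they formalize exactly the same argument.
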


\begin{proof}
Write the honest $i$'s ex-ante payoff as
\[
\mathrm{E}[U_i] \;=\; q_i\,(v_i - f^\star) - c_i^{\textsf{init}},
\]
where $q_i\in[0,1]$ is the inclusion-and-payment probability under \textsf{U-B}.
Given $\varepsilon>0$ choose competitive conditions such that $v_i - f^\star \le \varepsilon$.
Then $\mathrm{E}[U_i] \le q_i\,\varepsilon - c_i^{\textsf{init}} < 0$ since $c_i^{\textsf{init}}>0$.
This bound is independent of $(\alpha,D_p)$, so ex-ante IR (\textsf{O1}) cannot be guaranteed.
\end{proof}

\subsection{\textsf{F} with single-winner: Fair ordering makes a scalability limitation}
\label{subsec:single-fair}

Under \textsf{F}, ties among simultaneously valid challenges are broken uniformly at random and fees are negligible ($f\simeq 0$). So, utilities depend only on rewards and costs.

\begin{theorem}[Scalability bound under fair single-winner setting]
\label{thm:single-F}
If $N$ challengers simultaneously validate and compete under \textsf{F} with $m=1$,
then ex-ante nonnegative payoff for each participant requires
\[
\alpha D_p \;\ge\; N\,\tilde c^{\textsf{init}} \;+\; \tilde c^{\textsf{proc}}.
\]
For large $N$, this conflicts with $\alpha\le \min\{1,(1-\eta)/\phi\}$, yielding an empty feasibility region.
\end{theorem}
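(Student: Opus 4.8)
The plan is to compute the worst-case ex-ante utility of a participating challenger directly, and then collide the resulting lower bound on $\alpha D_p$ with the $N$-independent deterrence ceiling from Lemma~\ref{lem:alpha-upper-bound}. First I would pin down the selection probabilities. Under \textsf{F} (Section~\ref{subsec:ordering}), the $N$ simultaneously valid challenges are ordered by a symmetric, identity- and fee-agnostic random permutation, and the single winner slot takes the top position; by exchangeability each challenger is first with equal probability, so $q_i = 1/N$ for every $i$. Fees vanish ($f\simeq 0$), so the only surviving terms in the payoff are rewards and costs.

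Next I would write the ex-ante expected utility using the same accounting as in Theorem~\ref{thm:UP-m1-O1}: the identification cost $c_i^{\textsf{init}}$ is sunk unconditionally by every participant, while the processing cost $c_i^{\textsf{proc}}$ and the reward $\alpha D_p$ are realized only upon winning. Hence
\[
\mathrm{E}[U_i] \;=\; \tfrac{1}{N}\bigl(\alpha D_p - c_i^{\textsf{proc}}\bigr) - c_i^{\textsf{init}}.
\]
Imposing $\mathrm{E}[U_i]\ge 0$ and clearing the factor $1/N$ gives $\alpha D_p \ge N\,c_i^{\textsf{init}} + c_i^{\textsf{proc}}$. Taking the worst admissible cost profile, $c_i^{\textsf{init}}\le\tilde c^{\textsf{init}}$ and $c_i^{\textsf{proc}}\le\tilde c^{\textsf{proc}}$ (consistent with the role of $\tilde c$ as the IR worst-case guarantee in Section~\ref{sec:model}), the binding requirement is exactly $\alpha D_p \ge N\,\tilde c^{\textsf{init}} + \tilde c^{\textsf{proc}}$.

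For the infeasibility claim I would treat the deterrence ceiling as a constant in $N$. By Lemma~\ref{lem:alpha-upper-bound}, \textsf{O2} forces $\alpha \le \bar\alpha := \min\{1,(1-\eta)/\phi\}$, which is independent of $N$. The \textsf{O1} requirement above, rewritten as $\alpha \ge (N\,\tilde c^{\textsf{init}} + \tilde c^{\textsf{proc}})/D_p$, grows linearly in $N$ precisely because $\tilde c^{\textsf{init}}>0$. Therefore once $N$ exceeds the threshold $N^\ast := (\bar\alpha D_p - \tilde c^{\textsf{proc}})/\tilde c^{\textsf{init}}$, the lower bound strictly exceeds the ceiling, so the admissible interval $\bigl[(N\,\tilde c^{\textsf{init}}+\tilde c^{\textsf{proc}})/D_p,\ \bar\alpha\bigr]$ for $\alpha$ is empty.

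Every step here is elementary algebra, so there is no genuine mathematical obstacle; the load-bearing point is a modeling justification rather than a computation. I must argue that $q_i=1/N$ is forced—symmetric tie-breaking over exactly the $N$ competing valid challenges, with no fee or identity leverage available to skew it—and that $c^{\textsf{proc}}$ is incurred only by the realized winner while $c^{\textsf{init}}$ is paid by all. Both follow directly from the \textsf{F} definition and the cost decomposition already fixed in Section~\ref{sec:model}. The substantive content is then simply the linear-in-$N$ blow-up of the \textsf{O1} lower bound against an $N$-independent \textsf{O2} ceiling, which is exactly what makes the single-winner fair regime fail to scale.
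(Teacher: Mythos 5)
Your proof is correct and takes essentially the same route as the paper's: the same winning probability $1/N$ under symmetric tie-breaking, the same cost accounting ($c_i^{\textsf{init}}$ sunk by every participant, $c_i^{\textsf{proc}}$ and the reward $\alpha D_p$ realized only upon winning), the same worst-case substitution $c_i^{\textsf{init}}=\tilde c^{\textsf{init}}$, $c_i^{\textsf{proc}}=\tilde c^{\textsf{proc}}$, and the same collision of the linearly growing \textsf{O1} lower bound with the $N$-independent \textsf{O2} ceiling from Lemma~\ref{lem:alpha-upper-bound}. Your explicit exchangeability justification for $q_i=1/N$ and the explicit threshold $N^\ast$ only make quantitative what the paper leaves as ``for sufficiently large $N$.''
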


\begin{proof}
Each honest challenger wins with probability $1/N$.
An included winner receives $\alpha D_p$ and incurs $c_i^{\textsf{proc}}$, while every participant pays $c_i^{\textsf{init}}$ upfront.
Thus $
\mathrm{E}[U_i] \;=\; \frac{1}{N}\big(\alpha D_p - c_i^{\textsf{proc}}\big) \;-\; c_i^{\textsf{init}}$.
Under worst-case costs $c_i^{\textsf{init}}=\tilde c^{\textsf{init}}$ and $c_i^{\textsf{proc}}=\tilde c^{\textsf{proc}}$,
ex-ante nonnegativity requires $\alpha D_p \ge N\,\tilde c^{\textsf{init}}+\tilde c^{\textsf{proc}}$.
Combining with \textsf{O2}, $\alpha\le \min\{1,(1-\eta)/\phi\}$, shows infeasibility for sufficiently large $N$.
\end{proof}

\section{Multi-Winner: Feasibility with Calibration}
\label{sec:multi}

In this section, we analyze the multi-winner, non-exclusion setting and show that
both goals: honest non-loss (O1) and deterrence (O2) can be achieved simultaneously without the scalability issue exists in the single-winner setting.

\subsection{Feasible Interval under Non-Exclusion}\label{subsec:feasible-interval}

Under the multi-winner setting, the economic value of transaction ordering is significantly diminished. Therefore, the ordering fee is nearly zero with all the order fairness environments (\textsf{U-P}, \textsf{U-B}, and \textsf{F}). When $f \simeq 0$, both \textsf{O1} and \textsf{O2} reduce to bounds on $\alpha$.

\begin{theorem}[Feasible interval under non-exclusion]
\label{thm:feasible-interval}
Under $m = N$ and $f = 0$, the feasible interval for $\alpha$ is
\[
\alpha \in \left[ \frac{N \tilde{c}}{D_p}, \;\min\left\{ 1, \frac{1-\eta}{\phi} \right\} \right].
\]
And, the interval is non-empty if and only if
\[
\frac{N \tilde{c}}{D_p} \;\leq\; \min\left\{ 1, \frac{1-\eta}{\phi} \right\}.
\]
\end{theorem}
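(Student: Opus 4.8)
The plan is to obtain this result directly by specializing the two $\alpha$-bounds already proved to the non-exclusion regime $m=N$ with $f=0$, and then reading off the feasible set as the intersection of the \textsf{O1}-constraint and the \textsf{O2}-constraint. The conceptual content is that \textsf{O1} forces $\alpha$ upward (each winner must at least cover cost) while \textsf{O2} forces $\alpha$ downward (the coalition must not recapture too much), so feasibility is exactly the overlap of a lower half-line and an upper half-line.

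First I would invoke Lemma~\ref{lem:alpha-lower-bound} for \textsf{O1}. Substituting $m=N$ and $f=0$ into $\alpha_L(m,f)=m(\tilde c+f)/D_p$ gives the necessary-and-sufficient lower bound $\alpha\ge N\tilde c/D_p$. The step needing justification is why $f=0$ is legitimate across all three ordering modes here: once every valid challenge is included, a priority slot confers no advantage, so the clearing fee $f^\star$ degenerates to zero in \textsf{U-B} and \textsf{U-P} and is already zero in \textsf{F}. I would also invoke Assumption~\ref{assumption: best-effort}, whose worst-case capacity saturation makes the realized winner count equal $N$, so that each included honest challenger truly receives the per-winner share $\alpha D_p/N$ against worst-case cost $\tilde c$, matching the lemma's hypotheses.

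Next I would invoke Lemma~\ref{lem:alpha-upper-bound} for \textsf{O2}. With equal split and $m=N$, the $A=\phi N$ colluding winners recapture exactly $\phi\,\alpha D_p$ of the slash, so the coalition's net loss is $D_p-\phi\,\alpha D_p$; requiring this to be at least $\eta D_p$ yields $\alpha\le\min\{1,(1-\eta)/\phi\}$ verbatim. Intersecting this upper bound with the lower bound from the previous step produces the claimed interval $[\,N\tilde c/D_p,\ \min\{1,(1-\eta)/\phi\}\,]$.

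Finally, the non-emptiness characterization is elementary: a closed interval $[L,U]$ is non-empty precisely when $L\le U$, which here reads $N\tilde c/D_p\le\min\{1,(1-\eta)/\phi\}$. A small tidiness check I would record is that intersecting with the admissible range $\alpha\in(0,1]$ changes nothing, since the lower endpoint $N\tilde c/D_p$ is strictly positive (as $\tilde c>0$) and the upper endpoint never exceeds $1$. I do not anticipate a serious obstacle here; the only point demanding genuine argument rather than routine calculation is the vanishing of $f$ under non-exclusion in the unfair modes, since that is exactly what decouples multi-winner feasibility from the priority-capture failures established for the single-winner cases in Section~\ref{sec:single}.
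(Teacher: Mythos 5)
Your proposal is correct and follows essentially the same route as the paper: the paper's proof simply re-derives the \textsf{O1} lower bound $\alpha \ge N\tilde c/D_p$ and the \textsf{O2} upper bound $\alpha \le \min\{1,(1-\eta)/\phi\}$ inline (with $\phi$ as the coalition's recapture fraction, consistent with $A/N$ under equal split), whereas you obtain the identical bounds by specializing Lemma~\ref{lem:alpha-lower-bound} and Lemma~\ref{lem:alpha-upper-bound} to $m=N$, $f=0$ and intersecting. Your added justification that $f$ vanishes under non-exclusion mirrors the paper's prose preceding the theorem, so there is no substantive difference in the argument.
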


\begin{proof}
\emph{\textsf{O1} lower bound.} Under non-exclusion with equal split, each of the $N$ included winners receives $\alpha D_p/N$ and incurs cost $c_i$. Using the worst-case bound $c_i \le \tilde{c}$ and $f=0$, \textsf{O1} requires
$
\frac{\alpha D_p}{N} - \tilde{c} \;\ge\; 0
\quad\Longleftrightarrow\quad
\alpha \;\ge\; \frac{N \tilde{c}}{D_p}.
$

\emph{\textsf{O2} upper bound.} Let $\phi \in [0,1]$ denote the coalition’s recapture fraction of the challengers’ total payout $\alpha D_p$. Then the adversary’s net loss ($\text{Loss}_{\text{adv}}$) is
$(1-\phi \alpha)D_p.$
\textsf{O2} requires $(1-\phi \alpha)D_p \ge \eta D_p$, i.e., $
\phi \alpha \;\le\; 1-\eta
\quad\Longleftrightarrow\quad
\alpha \;\le\; \frac{1-\eta}{\phi},
$
together with the intrinsic bound $\alpha \le 1$. 

\end{proof}

\subsubsection{Calibration Regimes}\label{subsubsec:calibration}

Depending on the upper bound, two regimes arise: Case 1: $\alpha=1$ and Case 2: $\alpha = (1-\eta)/\phi < 1$. From Case 1, we get the similar scalability-limited result in Theorem~\ref{thm:single-F}. i.e., $D_P \ge N \tilde c$. On the other hand, Case 2 brings up a scale-free feasibility as the following lemma.

\begin{lemma}[Scale-free non-emptiness]\label{lem:scale-free}
In the non-exclusion setting, the feasible interval in Theorem~\ref{thm:feasible-interval} is non-empty for all $N$ whenever
\[
D_p \;\ge\; \frac{\tilde{c}\,A}{1-\eta}.
\]
\end{lemma}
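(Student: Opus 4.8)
The plan is to reduce the claim to the non-emptiness criterion already proved in Theorem~\ref{thm:feasible-interval}, namely $\frac{N\tilde{c}}{D_p} \le \min\{1, \frac{1-\eta}{\phi}\}$, and to specialize it to Case~2, the regime in which the deterrence ceiling binds. First I would note that Case~2 is exactly $\phi > 1-\eta$, where $\min\{1, \frac{1-\eta}{\phi}\} = \frac{1-\eta}{\phi}$, and then substitute the definition $\phi = A/N$ to rewrite this upper bound as $\frac{(1-\eta)N}{A}$. This puts the \textsf{O1} lower bound and the \textsf{O2} upper bound on equal footing as explicit functions of $N$.

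The key step is the cancellation of $N$. The non-emptiness inequality becomes $\frac{N\tilde{c}}{D_p} \le \frac{(1-\eta)N}{A}$; since $N>0$ is a common positive factor on both sides, dividing through by $N$ leaves the $N$-free condition $\frac{\tilde{c}}{D_p} \le \frac{1-\eta}{A}$, equivalently $D_p \ge \frac{\tilde{c}A}{1-\eta}$. This is precisely the hypothesis, so under it the interval is non-empty with no residual dependence on $N$.

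To close the argument I would check that the $\alpha \le 1$ branch of the minimum imposes nothing extra inside Case~2. There $N \le A/(1-\eta)$, hence $N\tilde{c} \le \frac{\tilde{c}A}{1-\eta} \le D_p$, which gives $\frac{N\tilde{c}}{D_p} \le 1$ automatically. Thus the single hypothesis $D_p \ge \frac{\tilde{c}A}{1-\eta}$ secures the full $\min$ condition throughout the deterrence-binding regime, uniformly in $N$.

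The hard part is not the algebra but the structural observation driving it: because $\phi = A/N$, the deterrence ceiling $\frac{1-\eta}{\phi}$ grows linearly in $N$ at exactly the same rate as the per-winner reward requirement $\frac{N\tilde{c}}{D_p}$, so the two cancel. I would make the contrast with the single-winner fair setting of Theorem~\ref{thm:single-F} explicit, since there only the reward side scales with $N$ and no cancellation is possible; that contrast, rather than any computation, is the conceptual content of the lemma.
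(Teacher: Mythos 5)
Your proof is correct and follows essentially the same route as the paper's: restrict to the deterrence-binding branch (Case~2), substitute $\phi = A/N$ so the \textsf{O2} ceiling becomes $(1-\eta)N/A$, and cancel the common factor $N$ against the \textsf{O1} requirement $N\tilde{c}/D_p$. Your extra check that the $\alpha \le 1$ branch of the minimum is automatically satisfied inside Case~2 is a small refinement that the paper leaves implicit (its proof simply takes $(1-\eta)/\phi$ as the operative upper bound), but it does not change the structure of the argument.
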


\begin{proof}
With upper bound $(1-\eta)/\phi$ and $\phi = A/N$, the non-emptiness condition from
Theorem~\ref{thm:feasible-interval} becomes
\[
\frac{N \tilde{c}}{D_p} \;\le\; \frac{1-\eta}{\phi}
\;=\; \frac{1-\eta}{A/N}
\;=\; \frac{(1-\eta)N}{A}.
\]
Cancel $N$ to obtain $\tilde{c}/D_p \le (1-\eta)/A$, i.e.,
$D_p \ge \tilde{c}A/(1-\eta)$, which is independent of $N$ (scale-free).
\end{proof}

From Theorem~\ref{thm:feasible-interval} and Lemma~\ref{lem:scale-free}, we conclude that in the multi-winner, non-exclusion setting, parameters $(\alpha, D_p)$ can be calibrated to satisfy O1 and O2. This feasibility holds regardless of the ordering fairness regime, as priority rents become irrelevant.
In particular, this calibration reveals a powerful, yet nuanced, scale-free property. 
When the deterrence bound is non-trivial, the required proposer deposit $D_p$ depends only on the absolute size of the colluding set, $A$, not on the total number of challengers, $N$.
This implies that the system can scale in honest participants without increasing the security deposit, provided the reward share $\alpha$ can be set below 1. 
The scale-free property is thus conditional on the colluder fraction maintaining a minimum threshold, $A/N > 1-\eta$. 
If this ratio falls below the threshold (e.g., the system becomes overwhelmingly honest), the calibration reverts to the scale-limited regime where $\alpha=1$.

\section{Discussion}
\label{sec:discussion}

In this section, we interpret our objectives and results as design trade-offs, highlight practical extensions of the baseline model, and discuss implications for ZK-based fraud proofs.

\subsection{Security objective strength and alternatives}

Our objectives \textsf{O1} and \textsf{O2} represent a deliberately strong security objective, and we briefly discuss weaker alternatives and how they change the system’s meaning.
\textsf{O1} requires ex-ante non-negative expected utility for every honest challenger in a fraudulent state, which corresponds to a permissionless setting where safety should not depend on a particular party being ``the one'' who is willing to subsidize verification.
\textsf{O2} complements this by requiring that adversarial coalitions suffer a meaningful loss, reflecting an operational constraint: even if correctness is eventually restored, repeated cheap attacks can impose external costs (congestion, delayed finality, degraded user experience) and thus are undesirable in practice.
Relaxing these contracts changes the meaning of the system.
For instance, an existential variant of \textsf{O1} (``there exists at least one honest challenger with non-negative expected utility'') shifts the protocol toward a market-dependent security model where safety becomes sensitive to participant composition and prevailing ordering/fee conditions.
Similarly, if \textsf{O2} is relaxed to a pure correctness/liveness requirement (``attacks are acceptable as long as they are eventually caught''), then attacks may still be economically cheap to repeat, and the resulting dispute overhead (fees, congestion, delays) can be borne by users or the protocol.
Our impossibility results for single-winner designs should therefore be read as statements about the tension between single-winner mechanisms and strong permissionless contracts, under realistic ordering power.

\subsection{Practical design variants and model extensions}

Our baseline model intentionally abstracts away several implementation choices to isolate the role of ordering power and winner selection. Firstly, some protocols separately reimburse challengers for the execution costs of disputing (most notably on-chain gas, and more generally verification/proving overhead), in addition to any reward funded from the proposer deposit. Modeling such reimbursement in a fully permissionless way requires a verifiable notion of incurred cost (or an enforceable cap) and protections against cost inflation or Sybil splitting, so we treat these costs as externally incurred in the baseline and view explicit cost reimbursement as a clean extension axis that mainly changes IR calibration. Secondly, while we use an equal-split rule among winners for simplicity, practical reward rules may be front-loaded or contribution-weighted to discourage free-riding by increasing rewards for early or high-information challenges. In the presence of ordering power, however, stronger early rewards can also benefit a well-prepared adversary that can reliably secure early inclusion, potentially making the mechanism more single-winner-like. Lastly, our non-exclusion results describe an idealized regime where all valid challenges can be included, whereas real deployments must choose dispute windows and capacity limits that bound how many winners can be processed. Extending the model with explicit time and capacity constraints (e.g., a cap $m$ and residual priority rents $f>0$) would quantify how finite inclusion shrinks the feasible parameter region and would provide guidance for window sizing in realistic environments.

\subsection{Implications for ZK fraud proofs and precomputation advantages}
Optimistic rollups are increasingly exploring ZK-based fraud proofs that resolve disputes via a single succinct proof submission rather than an interactive game \cite{succinct_op_succinct_lite_blog_2025}, \cite{kailua_book_intro}.
This shift can simplify disputes but also turns participation into a single-shot inclusion race, making transaction ordering and inclusion priority more consequential.
Since ZK proof generation is not instantaneous and often requires dedicated hardware and operational readiness, parties with earlier start times or standing proving infrastructure may enjoy a systematic advantage.
In particular, an adversarial proposer can anticipate the fraudulent instance \emph{before} publishing it and precompute (or partially precompute) the proving pipeline, then submit its preferred proof transaction immediately when the dispute window opens.
When dispute resolution is effectively single-winner (or supports only a very small number of includable proofs), this precomputation advantage can combine with ordering power to secure early inclusion and potentially reduce the ex-ante risk of attempting fraud.
This strengthens the case for non-exclusion approximations and for explicitly accounting for ordering power in ZK-based dispute paths.

\section{Conclusion}
\label{sec:conclusion}

Challenge-based protocols typically rely on the optimistic assumption that at least one honest validator will step up.
Our view is stricter: we study incentive mechanisms under deliberately strong security objectives, and note that weakening these objectives changes the meaning of the resulting security model.
Our analysis reveals that first-valid, single-slot designs are inherently fragile: proposers can block the inclusion of honest challenges, priority markets can drain any potential surplus, and even fair ordering mechanisms force rewards to grow unsustainably with the number of participants.
In contrast, non-exclusionary multi-winner designs address the root cause of this fragility by including every valid challenge within the dispute window, yielding clear calibration rules based on measured verification costs and anticipated collusion share.
At the same time, real deployments face time and capacity constraints, so explicit limits on includable winners can shrink the feasible region relative to the idealized non-exclusion regime.

Future work includes implementing and evaluating a non-exclusion, multi-winner design with partial burning, together with reference implementations and performance measurements for calibration.
Another direction is to stress-test the calibration with end-to-end simulations that model partially efficient markets and realistic adversaries, and to extend the model to capture the full multi-stage nature of protocols like BoLD and Dave.
Finally, as rollups explore ZK-based fraud proofs that collapse disputes into a single proof submission, it becomes increasingly important to understand how single-shot resolution and precomputation advantages interact with ordering power.

\bibliographystyle{splncs04}
\bibliography{references}

\end{document}